\newtheorem{prop}{Proposition}
\newtheorem{theo}{Theorem}
\newtheorem{coro}{Corollary}
\newtheorem{exmp}{Example}[section]
\newcommand{\com}[1]{{\color{red}\textbf{Comment}:#1}}
\newcommand{\com}[1]{}
\newcommand{\comment}[1]{}
\title{\LARGE \bf A Stackelberg Game Model of Flocking 
}
\author{Chenlan Wang$^{1}$, Mehrdad Moharrami$^{2}$, and Mingyan Liu$^{1}$
\thanks{$^{1}$ University of Michigan, Ann Arbor, USA. $^{2}$  University of
Iowa, Iowa City, USA. 
}
}
\begin{document}

\maketitle
\thispagestyle{empty}
\pagestyle{empty}

\begin{abstract}
 We study a Stackelberg game to examine how two agents determine to cooperate while competing with each other. Each selects an arrival time to a destination, the earlier one fetching a higher reward. There is, however, an inherent penalty in arriving too early as well as a risk in traveling alone. This gives rise to the possibility of the agents cooperating by traveling together while competing for the reward. In our prior work \cite{wang2023cooperation} we studied this problem as a sequential game among a set of $N$ competing agents in continuous time, and defined the formation of a group traveling together as arriving at exactly the same time. In the present study, we relax this definition to allow arrival times within a small window, and study a $2$-agent game in both continuous and discrete time, referred to as the flock formation game. We derive and examine the properties of the subgame perfect equilibrium (SPE) of this game. 
\end{abstract}

\section{Introduction}
Individuals often form groups for purposes such as enhanced protection,  more efficient resource pooling and consumption, etc. \cite{pitcher1982fish,sandler1980clubs}, even while they continue to compete at the individual level.  This is widely observed in both human societies and the world of wildlife. In the case of migratory birds, for instance, individuals form flocks during migration to reduce individual predation risk \cite{beauchamp2021flocking}, enhance navigation accuracy \cite{beauchamp2011long}, and achieve higher energy efficiency \cite{mirzaeinia2019energy}. 
However, competition for territories often starts immediately upon arrival at the breeding ground since a better territory often leads to higher reproductive success \cite{janiszewski2014timing}.

The formation of groups has been discussed and explored in many fields, including economics, computer science, social science, political science, and biology \cite{couzin2006behavioral,goette2006impact,hollard2000existence,milchtaich2002stability}. Different techniques and methods are introduced to study how stable groups are formed including game theory \cite{banerjee2001core, bogomolnaia2002stability,slikker2001coalition,wang2024structural, kokko1999competition, janiszewski2014timing, wang2023cooperation}, clustering \cite{van2013community,wittemyer2005socioecology}, and agent-based modeling \cite{collins2017agent,collins2018strategic}. In the context of classical game theory, the formation of groups is most typically studied as strategic, one-shot games (such as coalition formation games \cite{banerjee2001core, slikker2001coalition}, hedonic games \cite{bogomolnaia2002stability}, and its extensions \cite{wang2024structural}).

Two notable exceptions that study how groups form {\em in time} using {\em sequential} game models, where individuals make choices asynchronously, are Kokko \cite{kokko1999competition} and our prior work Wang et al. \cite{wang2023cooperation}, both motivated by behaviors of flocking. 
Specifically, $N$ agents each decide on an arrival time (to the breeding ground), knowing that there is a natural/biologically optimal time of arrival (and thus deviation in either direction carries a penalty) and that earlier arrival allows them to claim the best territory/reward of those remaining. The key difference between these two studies is that while \cite{kokko1999competition}
primarily focuses on the competitive relationship among individuals in their decisions, \cite{wang2023cooperation} explicitly introduces a risk term in the agents' utility function to capture the benefit of being part of a group, and shows that this results in a much richer set of subgame-perfect equilibria (SPE).

A key modeling assumption in both \cite{kokko1999competition} and 
\cite{wang2023cooperation} is that agents are only considered to be part of a group (or flock, used interchangeably in this paper) if they arrive at exactly the same time, which is assumed to be continuous. In this paper we seek to relax this key assumption and extend the model presented in \cite{wang2023cooperation} to discrete time as well. 
The motivation is that in many real-world scenarios groups (and the benefit they afford) are often more loosely defined. 
In the same context of migratory birds, predators such as hawks usually hunt their prey at the frequency of hourly or lower in the spring \cite{lindstrom1989finch}. Therefore, for those arriving at the same migratory stopover site within the same time frame of one to several hours,  the risk for each decreases with the total number of individuals arriving within that time frame rather than a single point in time.

We will refer to this relaxed notion of flocking simply as flocking, while referring to the more stringent requirement in \cite{wang2023cooperation} as {\em strict flocking}; the resulting game is referred to as the {\em flock formation} (FF) game. For ease of exposition, we will also limit our model to $N=2$ agents, which results in a two-step or Stackelberg game. The remainder of this paper is organized as follows. Section~\ref{sec:flock formation game}  introduces the Stackelberg game model. Sections \ref{sec:2-agent-continuous} and \ref{sec:discreteWFFG} derive in detail the SPEs in the continuous-time and discrete-time settings, respectively. Section~\ref{sec:discussion} summarizes and compares the results in this work with that obtained using the more restrictive definition of strict flocking \cite{wang2023cooperation}. Section \ref{sec:conclusion} concludes the paper.

\section{Game model and Preliminaries}
\label{sec:flock formation game}

The basic game model is very similar to that introduced in our prior work \cite{wang2023cooperation}, with the crucial difference of relaxing the definition of a group.  We will also limit our analysis to two agents.

Consider two agents, denoted $\mathcal{N}=\{1,2\}$, traveling to a destination in order to reach and compete for territories/rewards. Traveling together allows them to reduce travel expenses, such as increased energy efficiency and lower (predation) risk (e.g., in the case of migratory birds). On the other hand, they are also interested in arriving early to secure a territory with higher quality (e.g., in terms of food resources, nesting materials, and protection from predators). We are primarily interested in understanding when the two agents will decide to travel together, i.e., collaborate, despite their competitive relationship. 

Each agent has a positive attribute called {\em strength}, denoted by $\beta_i >0$, $i\in\mathcal{N}$, which represents the natural quality of the agent (e.g., surviving skills, foraging ability, flight experience, etc.). 
Similarly, each territory is associated with a positive value representing its quality, denoted by $E_i>0$, $i\in\mathcal{N}$. 
Without loss of generality, we will assume that Agent 1 is stronger than Agent 2: $\beta_1 > \beta_2$, and that territory 1 is better than territory 2: $E_1 > E_2 > 0$. A direct consequence of the strength attribute is that the same journey is less costly for the stronger one than it is for the weaker one, and that if they arrive at the same time, the stronger one will claim the better territory. 

An agent's decision is its time of arrival at the destination, denoted by $t_i$, $i=1,2$, with a joint action profile $\mathbf{t} = (t_1, t_2)$ or $\mathbf{t} = (t_i, {t}_{-i})$, following the notational convention. 

\subsection{Assumptions} 

\noindent (1) Full residency effect: whoever first occupies a territory will get to keep it. In other words, we assume a first-come first-serve model for the agents to claim the territories. 

\noindent (2) Tie-breaking: If both agents arrive at the same time, the stronger agent claims the better territory.  

\noindent (3) Sequential decision making: the stronger, Agent 1, the leader, chooses its arrival time $t_1$, which is announced and followed by Agent 2, the follower, who decides on $t_2$. It is assumed that everyone commits to their decisions, i.e., agent $i$ will indeed arrive at the decided and announced time $t_i$.  

\subsection{The utility function} 
The utility of agent $i$ is given as follows: 
\begin{equation}\label{eqn:u1} 
    u_i(\mathbf{t}) = e_i(\mathbf{t}) - c_i(t_i) - \bar{p}_i(\mathbf{t})~, 
\end{equation}
where $e_i(\mathbf{t})$ is the reward from the territory that agent $i$ obtains, $c_i(t_i)$ is the (travel) cost dependent alone on the agent's chosen time, and $\bar{p}_i(\mathbf{t})$ is another cost that depends on all agents' chosen times.  This last term will also be referred to as the (predation) risk term. 
We detail each term below. 
\paragraph{Benefit} if $t_i < t_{-i}$, $e_i(\mathbf{t}) = E_1$; if $t_i > t_{-i}$, $e_i(\mathbf{t}) = E_2$; if $t_i = t_{-i}$, based on the tie-breaker rule in the assumptions, $e_i(\mathbf{t}) = E_1$ for Agent 1 and $e_i(\mathbf{t}) = E_2$ for Agent 2.

\paragraph{Travel cost} $c_i(t_i)= \frac{1}{\beta_i}(t_i -t_o)^2 +c^i_o$, where $t_o$ is an {\em optimal arrival time}, at which point the travel cost is minimized, to a fixed $c^i_o$; deviation in either direction will increase the travel cost, and weaker agents (smaller $\beta$) are more sensitive to the sub-optimality of this deviation.  This model captures the cost of travel purely due to external factors (such as climate in the case of spring migration, where traveling during colder or warmer weather can be detrimental).  The fixed cost $c^i_o$ is agent-dependent, and generally lower for a stronger agent. However, the presence of this fixed cost does not impact our subsequent analysis since the agent's decision-making is entirely relative to the optimal $t_o$.  For this reason and without loss of generality, we will set $c^i_o=0$ for the rest of the paper.  
 \paragraph{Predation risk} $\bar{p}_i(\mathbf{t}) = \frac{r}{|\{j:|t_j-t_i| \leq w\}|}$, where $|\{j:|t_j-t_i| \leq w\}|$ is the total number of agents arriving within the $w$ unit of time before or after $t_i$ including agent $i$ itself, and $r$ is the (nominal) risk for a single individual. For simplicity, we will let $w=1$, which can be understood as 1 unit of time (i.e., an hour, a day, or a week). Since this is a 2-agent case, what this risk definition says is that whenever the two agents' arrival times fall within a time window of size $w$, they are considered to be traveling as a flock, and they each benefit from a risk reduction in half. This is the main modeling difference between the present paper and our prior work \cite{wang2023cooperation}, which assumes a strict flocking definition (requiring $w=0$); it can be seen that the present flocking definition is both a relaxation and a generalization.

To understand how agents make their choices in the face of resource competition but also the potential benefit in collaboration, we will examine the subgame perfect equilibrium (SPE) of the flock formation game in two different settings: continuous time and discrete time. The next two sections present results in each case, respectively. 

\subsection{Key results in the Strict Flocking Game (SFG) \cite{wang2023cooperation} for two agents}

There are only two types of SPEs in the 2-agent game when a flock is defined  strictly as those arriving at exactly the same time:
\begin{enumerate}
    \item $(t_o, t_o)$: When the difference between the two territories is small,  agents cooperate and travel together, as the benefit of flocking outweighs the differences in territorial quality. This is called the cooperation SPE.
    \item $( t_0 - \sqrt{(E_1-E_2)\beta_2},t_0)$: When the territorial difference is significant, the advantage of securing the better territory outweighs that of flocking. Consequently, the stronger agent arrives earlier to deter the weaker agent from competing for the better territory. The weaker agent, in response, abandons the competition and arrives at the optimal time $t_o$ with a minimal travel cost.
\end{enumerate} 

\section{The Continuous-time Game}\label{sec:2-agent-continuous}

The definition of flocking is such that if arriving slightly earlier than the stronger agent, the weaker agent not only gets the better territory but also benefits from flocking. 
What this means in a continuous-time setting is that Agent 2, being the follower, would never choose to arrive at exactly the same time as Agent 1, as it could simply move up its arrival by an infinitesimal amount to benefit from flocking without paying more in travel cost and taking the better territory (this is actually a disadvantage faced by the first move in this game).  Note that this is in direct contrast to the results from \cite{wang2023cooperation} outlined above, where strict flocking is often observed in an SPE.  

Our main results for the continuous-time game are summarized in the following theorem.  In short, 
there are three and only three types of pure strategy SPEs, and in all three cases, the weaker agent arrives later than the stronger agent. More specifically, if the territory difference ($E_1-E_2$) is not too large, then Agent 1 arrives just early enough to allow Agent 2 to arrive at exactly the optimal time $t_o$ while benefiting from flocking and settling for the inferior territory, in an attempt to discourage Agent 2 from competing for the superior territory. 
If $E_1-E_2$ is substantial, then Agent 1 has to arrive much earlier to secure the superior territory, and Agent 2 would either also move up its arrival to remain in a flock with Agent 1 if its benefit outweighs the increased travel cost, or otherwise arrive alone at time $t_o$.

\begin{prop}\label{prop:t0}
No agents will arrive later than $t_o$ in an SPE in the 2-agent continuous-time game. 
\end{prop}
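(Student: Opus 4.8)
The plan is to use backward induction, analyzing the follower's best response first and then the leader's choice, and to prove the two halves of the claim separately: no follower best response exceeds $t_o$, and no leader equilibrium choice exceeds $t_o$.

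For the follower, I would fix the leader's announced time $t_1$ and view $u_2(t_2) = e_2(\mathbf{t}) - \frac{1}{\beta_2}(t_2 - t_o)^2 - \bar{p}_2(\mathbf{t})$ as a function of $t_2$ alone. The benefit $e_2$ and the risk $\bar{p}_2$ are step functions of $t_2$, changing only at the thresholds $t_2 \in \{t_1 - 1, t_1, t_1 + 1\}$, while the travel cost is strictly convex with its unique minimum at $t_o$. The key observation is that when $t_1 \le t_o$ and $t_2 > t_o$ the follower is strictly later than the leader, so $e_2 = E_2$ is pinned; moreover, as $t_2$ increases beyond $t_o$ the only risk threshold it can cross is $t_1 + 1$, across which $\bar{p}_2$ jumps upward (flock to alone). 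Hence for $t_2 > t_o$ the benefit is constant, the risk is nondecreasing, and the travel cost is strictly increasing, so $u_2$ is strictly decreasing and the follower can only improve by moving back toward $t_o$. This shows that, in response to any $t_1 \le t_o$, the follower never arrives after $t_o$.

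For the leader, I would argue by contradiction: suppose an SPE has $t_1 > t_o$ and exhibit a profitable deviation to some $t_1' \le t_o$. The starting point is that when $t_1 > t_o$ the follower can secure the superior territory cheaply --- by arriving at $t_o < t_1$ it obtains $E_1$ at zero travel cost (and flocks whenever $t_1 \le t_o + 1$) --- so on the equilibrium path the leader is relegated to $E_2$ while still paying the strictly positive travel cost $\frac{1}{\beta_1}(t_1 - t_o)^2$. I would then compare this against two candidate deviations: the reflection $t_1' = 2t_o - t_1 \le t_o$, which preserves the leader's travel cost but puts the leader ahead of a follower arriving at $t_o$, letting the leader capture $E_1$; and, when the follower would still prefer to undercut, a shift to the deterrence threshold $t_1' = t_o - \sqrt{(E_1 - E_2)\beta_2}$, at which the follower becomes weakly unwilling to fight for $E_1$ and instead settles at $t_o$. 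Using $\beta_1 > \beta_2$ and $E_1 > E_2$, the resulting leader payoff strictly exceeds the payoff from $t_1 > t_o$, contradicting the equilibrium assumption.

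The main obstacle is this last deviation argument, for two reasons. First, the follower's best response is discontinuous in $t_1$: as $t_1$ crosses $t_o$, $t_o + 1$, and the deterrence threshold, the follower switches between grabbing $E_1$ and settling for $E_2$, and between flocking and traveling alone, so the comparison must be split into the regimes $t_o < t_1 \le t_o + 1$ and $t_1 > t_o + 1$. Second, at the benefit discontinuity $t_2 = t_1$ the follower's problem has an unattained supremum --- it always prefers to undercut the leader by an infinitesimal amount --- so a literal best response need not exist there; the deviation must therefore be chosen to land at a configuration where the follower's best response is attained (notably the follower arriving exactly at $t_o$), which is precisely where care is required and where the paper's convention for resolving these suprema enters.
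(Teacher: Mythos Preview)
Your plan follows the same backward-induction skeleton as the paper --- handle the follower, then the leader --- but the paper's argument is far terser: it asserts in two sentences that the follower strictly improves by nudging $t_2$ toward $t_o$ ``no matter what the first mover does,'' and that ``for this reason'' the same holds for the leader. Your follower analysis is in fact more careful than the paper's: you restrict to $t_1 \le t_o$, and this restriction matters, because for $t_1$ sufficiently far above $t_o$ the follower's unique best response can be $t_1 - 1 > t_o$ (staying just inside the flock window), so the paper's universal claim is literally false off path. Your restriction is exactly what is needed for the on-path statement once the leader half is in hand.

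Where you work harder than necessary is the leader half. You propose exhibiting a concrete deviation to $2t_o - t_1$ or to the deterrence threshold $t_o - \sqrt{(E_1-E_2)\beta_2}$, and then worry --- legitimately --- about whether the follower's best response is attained at that target. The paper's implicit route sidesteps this: if $t_1 > t_o$, the follower (who can reach $t_o < t_1$ at zero cost) will certainly undercut and seize $E_1$, and in this region the follower's best response is always attained (at $t_o$ or at $t_1 - 1$); sliding $t_1$ down by an infinitesimal amount while staying above $t_o$ leaves the leader with the same territory $E_2$ and weakly lower risk, but strictly lower travel cost. Hence no $t_1 > t_o$ can be optimal, and you never need to land at a point where the follower's supremum is unattained. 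The open-supremum issue you flag is real and is precisely what drives the non-existence cases in Theorem~\ref{thm:SPEcon}, but it does not bite for this proposition.
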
 

This can be established through a simple contradiction: the second mover would never choose a time later than $t_o$ no matter what the first mover does because by moving up its time by an infinitesimal amount the second mover can improve its utility; for this reason, the first mover would never choose a time later than $t_o$, either.

In what follows we will use the notation $t_{i-}$ to denote an arrival time $t_i-\epsilon$ for some small  $\epsilon>0$. Before presenting the main result of this game, it's worth noting that given Agent 1's decision of $t_1 \leq t_o$, Agent 2's best response has to be one of the following three: 
\begin{itemize}
    \item $t_{1-}$: Agent 2 obtains the better territory and benefits from flocking;  arriving any earlier would strictly decrease its utility because the travel cost and predation risk would both increase; 
    \item $t_1 +1$ (if $t_1 +1 \leq t_0$): Agent 2 obtains the worse territory but benefits from flocking; 
    \item $t_o$: Agent 2 obtains the worse territory; it may or may not benefit from flocking, but it has the lowest travel cost;
\end{itemize}

\begin{theo}\label{thm:SPEcon}
There exist three {and only three} types of pure strategy SPEs in the 2-agent continuous-time game: \\
\textbf{Type 1}: A flock with arrival times $\mathbf{t}^*_1 = (t_{1,1}^*,t_{2,1}^*)$:
\begin{align*}
    (t_{1,1}^*,t_{2,1}^*) = (t_o-\sqrt{\beta_2(E_1-E_2)}, t_o);
\end{align*}
\textbf{Type 2}: No flock with arrival times $\mathbf{t}^*_2 = (t_{1,2}^*,t_{2,2}^*)$:
\begin{align*}
(t_{1,2}^*,t_{2,2}^*) = (t_o - \sqrt{\beta_2(E_1 -E_2+\frac{r}{2})}, t_o);
\end{align*}
\textbf{Type 3}: A flock with arrival times $\mathbf{t}^*_3 = (t_{1,3}^*,t_{2,3}^*)$:
\begin{align*}
(t_{1,3}^*,t_{2,3}^*) = (t_o\! -\! \frac{\beta_2(E_1\! -\!E_2)+1}{2}, t_o \!- \!\frac{\beta_2(E_1\!-\!E_2)-1}{2}).
\end{align*}
The conditions for each of these SPEs are listed as follows. \\
(1) If $E_1 - E_2 \leq \frac{1}{\beta_2} $, then $\mathbf{t}^*_1 $ is {the unique} SPE.\\
(2) If $E_1 - E_2 > \frac{1}{\beta_2} $:
\begin{enumerate}
\item If $(E_1-E_2)\beta_2 < \sqrt{2r\beta_2} +1$:
    \begin{enumerate}
    \item If $4(E_1-E_2)\beta_1 \geq ((E_1-E_2)\beta_2+1)^2$, then $\mathbf{t}^*_3$ is {the unique} pure strategy SPE;
    \item  If $4(E_1-E_2)\beta_1 < ((E_1-E_2)\beta_2+1)^2$, then there does not exist any pure strategy SPE.
    \end{enumerate}
\item If $(E_1-E_2)\beta_2 > \sqrt{2r\beta_2} +1$:
    \begin{enumerate}
    \item If $(E_1-E_2-\frac{r}{2})\beta_1 \geq (E_1-E_2+\frac{r}{2})\beta_2$, then $\mathbf{t}^*_2$ is {the unique} pure strategy SPE;
    \item If $(E_1-E_2-\frac{r}{2})\beta_1 < (E_1-E_2+\frac{r}{2})\beta_2$, then there does not exist any pure strategy SPE.
    \end{enumerate}
\item If $(E_1-E_2)\beta_2 = \sqrt{2r\beta_2} +1$:
    \begin{enumerate}
    \item If $(E_1-E_2-\frac{r}{2})\beta_1 \geq (E_1-E_2+\frac{r}{2})\beta_2$, or $4(E_1-E_2)\beta_1 \geq ((E_1-E_2)\beta_2+1)^2$, then $\mathbf{t}^*_2$ or $\mathbf{t}^*_3$ are the pure strategy SPEs, respectively;
    \item If $(E_1-E_2-\frac{r}{2})\beta_1 < (E_1-E_2+\frac{r}{2})\beta_2$, and $4(E_1-E_2)\beta_1 < ((E_1-E_2)\beta_2+1)^2$, then there does not exist any pure strategy SPE.
    \end{enumerate}
\end{enumerate}
\end{theo}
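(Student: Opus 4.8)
The plan is to solve the game by backward induction, reparametrizing the leader's move by $x := t_o - t_1 \ge 0$ (legitimate by Proposition~\ref{prop:t0} and the standing restriction $t_1\le t_o$). First I would write Agent~2's payoff for each of the three candidate best responses already isolated in the excerpt, as a function of $x$: option A ($t_2=t_{1-}$) yields $E_1-\frac{x^2}{\beta_2}-\frac r2$; option B ($t_2=t_1+1$, feasible for $x\ge1$) yields $E_2-\frac{(x-1)^2}{\beta_2}-\frac r2$; and option C ($t_2=t_o$) yields $E_2-\frac r2$ when $x\le1$ (still a flock) but $E_2-r$ when $x>1$ (now alone). The crucial feature to track throughout is this jump in C's risk term at $x=1$.

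Next I would locate the break-even points between these options, which turn out to be exactly the quantities in the theorem: A versus C gives $x_A=\sqrt{\beta_2(E_1-E_2)}$ in the flock regime and $x_{A'}=\sqrt{\beta_2(E_1-E_2+\frac r2)}$ in the alone regime; A versus B gives $x_{AB}=\frac{\beta_2(E_1-E_2)+1}{2}$; and B versus C gives $x_{BC}=1+\sqrt{\frac{\beta_2 r}{2}}$. Comparing these thresholds is where the case conditions are born: $E_1-E_2\le\frac1{\beta_2}$ is precisely $x_A\le1$, so Agent~2 first abandons undercutting in favor of a flocking arrival at $t_o$; and for $E_1-E_2>\frac1{\beta_2}$, comparing $\beta_2(E_1-E_2)$ against $\sqrt{2r\beta_2}+1$ is, after squaring, exactly the comparison of $x_{AB}$ against $x_{A'}$, i.e.\ whether Agent~2 first switches from undercutting to B (the flock at distance one, Type~3) or to C (arriving alone, Type~2).

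Having the best-response map, I would compute Agent~1's induced utility $U_1(x)$. Where Agent~2 undercuts, $U_1(x)=E_2-\frac{x^2}{\beta_1}-\frac r2$; where Agent~2 plays B or C, Agent~1 secures the better territory and $U_1(x)=E_1-\frac{x^2}{\beta_1}-(\text{risk})$. Since each piece is strictly decreasing in $x$, the only maximizer candidates are $x=0$ (let Agent~2 undercut, keep the inferior territory at zero travel cost, payoff $E_2-\frac r2$) and the smallest ``deterrence threshold'' at which Agent~2 first plays a genuine point B or C. Writing out ``deterrence payoff $\ge E_2-\frac r2$'' at the relevant threshold reproduces the remaining inequalities: at $x_A$ it reduces to $\beta_1\ge\beta_2$, which always holds and explains why Type~1 is unconditionally the unique SPE in case~(1); at $x_{AB}$ it becomes $4\beta_1(E_1-E_2)\ge(\beta_2(E_1-E_2)+1)^2$ (condition~2.1(a)); and at $x_{A'}$ it becomes $(E_1-E_2-\frac r2)\beta_1\ge(E_1-E_2+\frac r2)\beta_2$ (condition~2.2(a)).

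The step I expect to be the main obstacle is the existence/non-existence dichotomy, which hinges on the open-boundary nature of the undercut option A. The payoff $E_2-\frac r2$ that Agent~1 would collect by ``giving up'' is a supremum never attained by a genuine profile: for any actual $t_2<t_1$ Agent~2 strictly prefers to move $\epsilon$ closer to $t_1$, while at $t_2=t_1$ the tie-breaker hands it the inferior territory, so no best response exists whenever undercutting is strictly optimal. I would argue that at the operative deterrence threshold the genuine action B (or C) exactly attains the undercut supremum, making it Agent~2's unique attained best response and the profile a bona fide SPE. When the deterrence payoff dominates $E_2-\frac r2$ (the (a) conditions, and always in case~(1)), this threshold is Agent~1's global optimum and yields the stated SPE; a short self-consistency check---if Agent~2 undercut at the threshold, Agent~1 would strictly prefer $x=0$, contradicting optimality---pins down uniqueness. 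When instead $E_2-\frac r2$ strictly dominates the deterrence payoff (the (b) conditions), Agent~1 wants the give-up outcome, but it is both unattainable and strictly better than every genuine profile, so no pure SPE exists. Finally, the boundary case~2.3 ($x_{AB}=x_{A'}$) is handled by observing that B and C simultaneously attain the undercut supremum, so Agent~1 obtains Type~3 or Type~2 according to which incentive inequality holds, and neither when both fail.
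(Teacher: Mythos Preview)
Your approach is correct and is essentially the paper's own argument: backward induction, locating the tipping points at which Agent~2 switches from undercutting ($t_{1-}$) to the genuine response $t_1+1$ or $t_o$, and then comparing Agent~1's utility at that threshold against the ``give-up'' value $E_2-\tfrac r2$. Your $x$-reparametrization and the explicit supremum-versus-maximum discussion for the non-existence cases are tidier than the paper's version, but the computations, thresholds, and case splits coincide line for line.
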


\begin{proof}[Proof Sketch]
    Notice that a pure strategy exists only if agent 2 withdraws from competing with agent 1; otherwise, agent 2's best response will be $t_{1-}$, and there would be no pure strategy SPE. Hence, in any SPE, agent 1 picks a time $t_1$ at which agent 2 is indifferent between $t_{1-}$ and either $t_1+1$ or $t_0$. We refer to this point as the tipping point.\\
    {\bf Case (1): $E_1 - E_2 \leq \frac{1}{\beta_2}$}. 
    Agent 2 will cease competing once its arrival time reaches the tipping point, where it becomes indifferent between $t_{1-}$ and the time when Agent 2 obtains the worse territory with the lowest traveling cost (which is indeed $t_o$) while still benefiting from the weakly flocking (i.e. $t_o-t_1 \leq 1$). Based on calculations (see the full proof in the Appendix for details), the tipping time is $t_1 = t_o-\sqrt{\beta_2(E_1-E_2)} = t_{1,1}^*$. In other words, if $t_1 = t_o-\sqrt{\beta_2(E_1-E_2)}$ (thus $t_1 + 1 \geq t_0$), Agent 2's best response is $t_o$ rather than $t_{1-}$.
    If Agent 1 ends up with the worst territory, then the best utility it can obtain is at $t_0$ with the lowest travel cost and the benefit from the flock. Calculation shows: $u_1(t_0, t_{0-}) < u_1(t_{1,1}^*, t_0)$. Hence, Agent 1's utility decreases if it ends up with the worse territory. Therefore, Agent 1's best response is $t_{1,1}^*$ while Agent 2's best response after observing Agent 1's action is $t_{2,1}^* = t_o$.
    
\noindent{\bf Case (2): $E_1 - E_2 > \frac{1}{\beta_2}$}. 
    Given the larger territory difference in this scenario, leading to increased competition, all three ($t_{1-}$, $t_1+1$ ($t_1+1<t_0$), and $t_0$) are possible best responses for Agent 2 given Agent 1's arrival time $t_1$.
    Comparing $u_2(t_1, t_1 + 1)$ and $u_2(t_1, t_0)$, the case can be separated into three subcases:
    \begin{enumerate}
        \item $u_2(t_1, t_1 + 1) > u_2(t_1, t_0)$: if Agent 2 ends up withdrawing the competition with Agent 1, its best response is $t_1+1$. By calculation, the tipping point ($t_{1,3}^*$) that leads to this withdrawal is the time when Agent 2 is indifferent between $t_{1-}$ and $t_1+1$. \\
        If Agent 1 ends up with the worst territory, then the best utility it can obtain is at $t_o$. Since $t_2^*(t_o) = t_{o-}$, Agent 1's utility is $u_1(t_o, t_{o-})$. If $u_1(t_{1,3}^*, t_{1,3}^*+1) \geq u_1(t_o, t_{o-})$, then $\mathbf{t}^*_3$ is {the unique} pure strategy SPE; otherwise, there does not exist any pure strategy SPE.
        \item $u_2(t_1, t_1 + 1) < u_2(t_1, t_0)$: if Agent 2 ends up withdrawing the competition with Agent 1, its best response is $t_0$. By calculation, the tipping point ($t_{1,2}^*$) that leads to this withdrawal is the time when Agent 2 is indifferent between $t_{1-}$ and $t_0$. \\
        If Agent 1 ends up with the worst territory, then the best utility it can obtain is at $t_o$. Since $t_2^*(t_o) = t_{o-}$, Agent 1's utility is $u_1(t_o, t_{o-})$. If $u_1(t_{1,2}^*, t_0) \geq u_1(t_o, t_{o-})$, then $\mathbf{t}^*_2$ is {the unique} pure strategy SPE; otherwise, there does not exist any pure strategy SPE.
        \item $u_2(t_1, t_1 + 1) = u_2(t_1, t_0)$: if Agent 2 ends up withdrawing the competition with Agent 1, its best response is either $t_0$ or $t_1 + 1$. Therefore, all the SPEs (i.e. $\mathbf{t}^*_2$, $\mathbf{t}^*_3$) in the two other subcases are possible. If $u_1(t_o, t_{o-})$ is the largest among the three: $u_1(t_o, t_{o-})$, $u_1(t_{1,2}^*, t_0)$, $u_1(t_{1,3}^*, t_{1,3}^*+1)$, then there does not exist any pure strategy SPE.
\end{enumerate}
\vspace{-5mm}
\end{proof}

\begin{coro}
The stronger agent always arrives earlier than the weaker
agent in an SPE of the 2-agent continuous-time game: $t_1^* < t_2^*$.
\end{coro}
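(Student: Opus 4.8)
The plan is to invoke the complete classification furnished by Theorem~\ref{thm:SPEcon} and then simply read off the ordering of arrival times in each of the three equilibrium types. Because the theorem asserts that every pure strategy SPE of the continuous-time game is exactly one of the three listed profiles $\mathbf{t}^*_1$, $\mathbf{t}^*_2$, $\mathbf{t}^*_3$, it suffices to verify $t_1^* < t_2^*$ on a case-by-case basis; no further equilibrium analysis is needed, since all of that content is already carried by the theorem.

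First I would dispose of Types~1 and 2 together. In both, the follower arrives at the optimal time, $t_2^* = t_o$, while the leader arrives at $t_1^* = t_o - \delta$ with $\delta = \sqrt{\beta_2(E_1-E_2)}$ in Type~1 and $\delta = \sqrt{\beta_2(E_1-E_2+\frac{r}{2})}$ in Type~2. The standing assumptions $\beta_2 > 0$, $r > 0$, and $E_1 > E_2$ render each radicand strictly positive, hence $\delta > 0$ and $t_1^* = t_o - \delta < t_o = t_2^*$ at once.

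For Type~3 I would compute the gap directly. Subtracting the two coordinates gives $t_{2,3}^* - t_{1,3}^* = \frac{\beta_2(E_1-E_2)+1}{2} - \frac{\beta_2(E_1-E_2)-1}{2} = 1 > 0$, where the $\beta_2(E_1-E_2)$ terms cancel and leave precisely the flocking window $w = 1$. This settles the last case, so $t_1^* < t_2^*$ holds in every SPE.

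There is essentially no genuine obstacle here: the statement is a direct corollary whose entire substance lives in Theorem~\ref{thm:SPEcon}. The only points worth flagging are that the argument leans on the exhaustiveness of the three-type classification (the ``three and only three'' claim of the theorem) and that Type~3 is the sole case in which the two agents actually flock with a strictly positive separation, that separation being exactly the window size. The corollary's value is therefore qualitative—it extracts one uniform feature, the strict precedence of the stronger agent, that is common to all equilibrium types.
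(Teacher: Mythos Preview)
Your proposal is correct and is precisely the argument the paper has in mind: the corollary is stated immediately after Theorem~\ref{thm:SPEcon} without a separate proof, so its content is exactly the case-by-case inspection of the three equilibrium profiles that you carry out. Your computations for all three types are accurate, and your remark that the exhaustiveness of the classification is what carries the claim is the right observation.
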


We now compare the above results with those obtained in the strict flocking game \cite{wang2023cooperation}. Given the more relaxed definition of flocking, the competition between the two agents is more fierce since the weaker agent can more easily enjoy the benefit of flocking. As a result, the cooperation SPE ($t_o, t_o$) observed in the strict flocking game \cite{wang2023cooperation} is no longer valid. Similarly, while an SPE always exists in the strict flocking game, it may not exist in 
the present game as shown in Theorem \ref{thm:SPEcon}. Another noteworthy observation is that while in the strict flocking game the weaker agent always arrives at $t_o$, in the present game it might arrive earlier than $t_o$.

\section{The Discrete-time Game}
\label{sec:discreteWFFG}

We now restrict the agents' actions to discrete-time: $t_i = t_o - k_i$, $k_i \in \mathcal{Z}$. 
In this case, in order to secure the superior territory, Agent 2 will need to arrive exactly one unit of time ahead of Agent 1. Thus, the cost associated with traveling one unit of time earlier becomes relevant, which may prompt cooperation in addition to competition. Indeed, in this discrete-time case, the strict flock $(t_o, t_o)$  emerges as a possible SPE shown below.  

Another, perhaps more interesting observation that emerges in the discrete-time setting is that it is now possible for the weaker agent to arrive earlier than the stronger agent in an SPE, in contrast to what happens in the strict flocking game \cite{wang2023cooperation} or in the continuous-time model presented in the previous section. This new SPE occurs when the stronger agent decides to arrive at $t_o$, knowing the weaker agent will arrive earlier, at $t_o-1$. 
This turns out to be an interesting example of {\em first-mover disadvantage}: Agent 1 may simply have no good options even though it knows Agent 2 will get ahead no matter what it does first; as a matter of fact, when this happens Agent 2 will enjoy a strictly higher utility than Agent 1. Note that the same phenomenon existed in the continuous-time game presented in the previous section, with the direct consequence of the non-existence of pure strategy SPE (see cases: (2)1)b, (2)2)b, (2)3)b)).

Since the arrival times are in the discrete setting, we will use $\lceil x \rceil$ to denote $\text{ceil}(x)$. 

Similar to the reasoning given for the continuous-time game, we note that given Agent 1's decision of $t_1 \leq t_0$, Agent 2's best response has to be one of the following three choices: $t_1 - 1$, $t_1 +1$, and $t_o$.

\begin{theo}\label{thm:SPEdis}
There exist five types of pure strategy SPEs in the 2-agent discrete-time game:\\
\textbf{Type 1}: $ \mathbf{t}^*_1 = (t_o, t_o)$, a strict flock;\\
\textbf{Type 2}: $\mathbf{t}^*_2 = (t_o-1, t_o)$, a flock; \\
\textbf{Type 3}: $\mathbf{t}^*_3 = (t_o - k^*, t_o - k^*+1)$, a flock;\\
\textbf{Type 4}: $\mathbf{t}^*_4 = (t_o - k^*, t_o)$, no flock;\\
\textbf{Type 5}: $\mathbf{t}^*_5 = (t_o , t_o - 1)$, a flock;\\
where $k^* = \lceil\min(\sqrt{(E_1-E_2+\frac{r}{2})\beta_2}, \frac{(E_1-E_2)\beta_2}{4}+1) \rceil - 1$. The conditions for each of these SPEs are listed as follows. \\
(1) If $E_1 - E_2 \leq \frac{1}{\beta_2} $, then $\mathbf{t}^*_1$ is {the unique} SPE.\\
(2) If $E_1-E_2 \in (\frac{1}{\beta_2}, \frac{4}{\beta_2}]$, then $\mathbf{t}^*_2$ is {the unique} SPE.\\
(3) If $E_1-E_2 > \frac{4}{\beta_2}$:
\begin{enumerate}
    \item If $k^* \leq \sqrt \frac{r}{2}\beta_2+1$,
        \begin{enumerate}
        \item If $k^*<\sqrt{(E_1 - E_2)\beta_1}$, then $\mathbf{t}^*_3$ is {the unique} SPE.
        \item If $k^*>\sqrt{(E_1 - E_2)\beta_1}$, then $\mathbf{t}^*_5$ is {the unique} SPE.
        \item If $k^*=\sqrt{(E_1 - E_2)\beta_1}$ is an integer, then $\mathbf{t}^*_3$ and $\mathbf{t}^*_5$ are the two SPEs. 
        \end{enumerate}
    \item If $k^* > \sqrt \frac{r}{2}\beta_2+1$,
        \begin{enumerate}
        \item If $k^*<\sqrt{(E_1 - E_2-\frac{r}{2})\beta_1}$, then $\mathbf{t}^*_4$ is {the unique} SPE.
        \item If $k^*>\sqrt{(E_1 - E_2-\frac{r}{2})\beta_1}$, then $\mathbf{t}^*_5$ is {the unique} SPE.
        \item If $k^*=\sqrt{(E_1 - E_2-\frac{r}{2})\beta_1}$ is an integer, then $\mathbf{t}^*_4$ and $\mathbf{t}^*_5$ are the two SPEs.
        \end{enumerate}
\end{enumerate}
\end{theo}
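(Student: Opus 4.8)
The plan is to solve the two-stage game by backward induction, working throughout with the integer arrival offset $k_i := t_o - t_i \ge 0$, where nonnegativity is Proposition~\ref{prop:t0}. Writing $\Delta := E_1 - E_2$ and recalling that flocking (risk $\tfrac r2$ rather than $r$) holds exactly when $|k_1-k_2|\le 1$, the three candidate follower responses to a leader offset $k_1$ carry the explicit utilities
\begin{align*}
\text{(compete, } k_2 = k_1+1): &\quad E_1 - \tfrac{(k_1+1)^2}{\beta_2} - \tfrac r2,\\
\text{(concede-flock, } k_2 = k_1-1): &\quad E_2 - \tfrac{(k_1-1)^2}{\beta_2} - \tfrac r2,\\
\text{(concede-optimal, } k_2 = 0): &\quad E_2 - \tfrac r2\ (k_1\le 1),\ \ E_2 - r\ (k_1\ge 2).
\end{align*}
First I would show these exhaust Agent~2's best responses: to obtain $E_1$ the cheapest move is to undercut by exactly one unit ($k_2=k_1+1$); to concede $E_1$ the follower either stays flocked and as close to $t_o$ as possible ($k_2=k_1-1$) or drops to the zero-cost point $t_o$ ($k_2=0$), any other offset being strictly dominated.

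Next I would pin down Agent~2's best response as a function of $k_1$ by comparing the three utilities. The compete-vs-concede-optimal indifference is $(k_1+1)^2=\beta_2(\Delta+\tfrac r2)$, the compete-vs-concede-flock indifference is $k_1=\tfrac{\beta_2\Delta}{4}$, and the concede-flock-vs-concede-optimal indifference is $(k_1-1)^2=\tfrac{r\beta_2}{2}$, i.e. $k_1=\sqrt{\tfrac r2\beta_2}+1$. The key structural fact is monotonicity: the net advantage of competing, $\min\!\big(u^{\text{compete}}-u^{\text{flock}},\,u^{\text{compete}}-u^{\text{opt}}\big)$, is the minimum of two functions strictly decreasing in $k_1$, hence strictly decreasing. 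Therefore there is a single tipping offset, the smallest integer $k_1$ at which Agent~2 concedes, and combining the two conceding branches it equals $\lceil\min(\tfrac{\beta_2\Delta}{4},\,\sqrt{\beta_2(\Delta+\tfrac r2)}-1)\rceil$; applying the identity $\lceil x\rceil-1=\lceil x-1\rceil$ rewrites this as the stated $k^*=\lceil\min(\sqrt{(\Delta+\tfrac r2)\beta_2},\,\tfrac{\beta_2\Delta}{4}+1)\rceil-1$. The concede-flock-vs-concede-optimal threshold $k^*\lessgtr\sqrt{\tfrac r2\beta_2}+1$ decides which conceding move the follower actually plays at $k_1=k^*$, giving the (3)1 versus (3)2 split.

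With the follower response known, I would optimize the leader. Agent~1 has exactly two qualitative options: secure $E_1$ by arriving at the minimal deterring offset $k_1=k^*$ (any larger $k_1$ only adds travel cost by monotonicity), or concede $E_1$, in which case its best fallback is $t_o$, where Agent~2 undercuts to $t_o-1$ and the pair still flocks, yielding $u_1=E_2-\tfrac r2$ (Type~5). Securing $E_1$ yields $E_1-\tfrac{(k^*)^2}{\beta_1}-\tfrac r2$ when the follower concede-flocks (Type~3) and $E_1-\tfrac{(k^*)^2}{\beta_1}-r$ when it concede-optimizes with $k^*\ge2$ so no flock forms (Type~4). Comparing each against $E_2-\tfrac r2$ reduces to $(k^*)^2\lessgtr\beta_1\Delta$ and $(k^*)^2\lessgtr\beta_1(\Delta-\tfrac r2)$, producing precisely the thresholds $\sqrt{(E_1-E_2)\beta_1}$ and $\sqrt{(E_1-E_2-\tfrac r2)\beta_1}$, with the integer-equality cases giving coexistence of Type~3\,/\,Type~5 or Type~4\,/\,Type~5.

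Finally I would dispatch the small-$\Delta$ regimes by direct evaluation rather than the $k^*$ formula, since there the concede-optimal move at $t_o$ still flocks. For $\Delta\le\tfrac1{\beta_2}$ the follower concedes already at $k_1=0$, so Agent~1 rests at $t_o$ (Type~1); for $\Delta\in(\tfrac1{\beta_2},\tfrac4{\beta_2}]$ the tipping offset is $k_1=1$ and securing $E_1$ there beats conceding since $\Delta>\tfrac1{\beta_2}$ (Type~2). Existence is automatic here—unlike the continuous case, conceding is a clean pure-strategy fallback because undercutting now costs a full unit—and uniqueness follows from the strict monotonicity of the compete advantage (unique tipping point) together with the strict leader comparison away from the integer-equality boundaries. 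I expect the main obstacle to be the bookkeeping around $k^*$: correctly fusing the two ceilings from the two conceding branches, verifying $k^*\ge2$ throughout case~(3) so that the no-flock accounting for Type~4 is valid, and checking the tie-breaking at the boundary offsets $k_1\in\{0,1\}$ and at the integer-equality thresholds so that exactly the listed (co-)equilibria emerge.
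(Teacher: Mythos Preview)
Your proposal is correct and follows essentially the same backward-induction route as the paper: identify the three candidate follower moves, locate the tipping offset where Agent~2 switches from competing to conceding, then compare the leader's payoff from deterring at that offset against the fallback of resting at $t_o$. The paper derives the tipping $k^*$ from a four-inequality system (concession at $k_1$, strict competition at $k_1-1$), whereas you obtain it from the single observation that the net competing advantage is strictly decreasing in $k_1$; both lead to the same $k^*$ and the same case splits. One small remark: in Case~(2) your justification ``since $\Delta>\tfrac{1}{\beta_2}$'' for the leader preferring $t_o-1$ over $t_o$ implicitly also uses $\beta_1>\beta_2$, since the actual comparison is $\Delta>\tfrac{1}{\beta_1}$.
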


\begin{proof}[Proof Sketch]
     {\bf Case (1): $E_1 - E_2 \leq \frac{1}{\beta_2}$}\\
    Since $E_1 - E_2 \leq \frac{1}{\beta_2}$, then $u_2 (t_1, t_1) \geq u_2(t_1, t_1 - 1)$ given any $t_1 \leq t_0$. Therefore, the weaker agent strictly prefers to form a strict flock with the stronger agent rather than competing for better territory in a flock. Anticipating $t_2^*(t_1) \geq t_1$, the stronger agent's best response is then $t_1^* = t_o$ as it reaches its global minimum with the lowest traveling cost, the lowest predation cost, and the best territory. Thus, $\mathbf{t}^*_1$ is the unique SPE.

    {\bf Case (2): $E_1-E_2 \in (\frac{1}{\beta_2}, \frac{4}{\beta_2}]$}\\
    Since $E_1-E_2 > \frac{1}{\beta_2}$, then $u_2(t_o, t_o) < u_2(t_o, t_o-1)$. Thus, $(t_o, t_o)$ is no longer an SPE, and Agent 2 wants to compete with Agent 1 for better territory. Therefore, Agent 1 has to arrive at or earlier than $t_o - 1$; otherwise, it will end up with a worse territory and a lower utility. If $t_1 = t_o -1$, then $u_2(t_1, t_o) \leq u_2(t_1, t_1 - 1)$ and thus Agent 2 is either indifferent between arriving one unit time earlier than arriving later in the flock or strictly prefers the latter (i.e. $t_2^* (t_1 = t_o-1) = t_o$).  Therefore, Agent 1's best response, anticipating Agent 2's decision, is $t_1^* = t_o -1$. As a result,  $\mathbf{t}^*_2$ is the unique SPE.

    {\bf Case (3): $E_1-E_2>\frac{4}{\beta_2}$}\\ 
When Agent 1 arrives at $t_o - k_1$ and obtains the best territory, the best response of Agent 2 is either $t_o - k_1+1$ or $t_o$ (i.e. $t_o - k_1-1$ is no better than at least one of the other two choices). This also implies that Agent 1 can't secure the best territory if it arrives one unit time later at $t_o - k_1 + 1$, which means that Agent 2 is willing to arrive earlier to get the best territory. Therefore, if $t_1 = t_o - k_1 + 1$, then $t_o - k_1$ is the best response for Agent 2, and thus both $t_o - k_1+2$ and $t_o$ are strictly worse than $t_o - k_1$. 

If Agent 1 does not secure the best territory $E_1$ and ends up with the worst territory $E_2$, then its best response is to arrive at $t_o$ with the lowest travel cost. Observing $t_1 = t_o$, Agent 2's best response is $t_o -1$ as $u_2(t_o, t_o-1) > u_2(t_o, t_o)$. 

Depending on the sets of conditions and which of the two cases above occurs, we have three possible SPEs: $\mathbf{t}^*_3$, $\mathbf{t}^*_4$, and $\mathbf{t}^*_5$.
Detailed calculations can be found in the Appendix.
\end{proof}

When competition for the better territory intensifies, i.e., when $E_1 \gg E_2$, the stronger agent prioritizes securing the better territory, even at a high travel cost. We elaborate on this in the following corollary.

\begin{coro}\label{coro:deltaE_large}
    As the territorial difference becomes substantial: $E_1 - E_2 \rightarrow \infty$, $\mathbf{t}^*_4$ becomes the unique SPE in the 2-agent discrete-time game.
\end{coro}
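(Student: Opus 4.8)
The plan is to show that for all sufficiently large $E_1-E_2$ the hypotheses of branch (3), sub-case 2, part (a) of Theorem~\ref{thm:SPEdis} are satisfied, since that branch already asserts $\mathbf{t}^*_4$ to be the unique SPE. Writing $\Delta := E_1-E_2$, I would carry out an asymptotic analysis of $k^*$ as $\Delta\to\infty$ and then verify, in order, the three inequalities that isolate the $\mathbf{t}^*_4$ branch: $\Delta>\frac{4}{\beta_2}$ (to enter Case~(3)), $k^*>\sqrt{\frac{r}{2}}\beta_2+1$ (to enter sub-case~2), and $k^*<\sqrt{(\Delta-\frac{r}{2})\beta_1}$ (to select $\mathbf{t}^*_4$ rather than $\mathbf{t}^*_5$).

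First I would pin down the growth rate of $k^*$. In the defining expression $k^*=\lceil\min(\sqrt{(\Delta+\frac{r}{2})\beta_2},\ \frac{\Delta\beta_2}{4}+1)\rceil-1$, the first argument grows like $\sqrt{\Delta}$ while the second grows linearly in $\Delta$, so for $\Delta$ large the minimum is attained by the square-root term; the $\lceil\cdot\rceil-1$ operation then shifts the value by less than one unit, giving the two-sided bound
\begin{equation*}
\sqrt{(\Delta+\tfrac{r}{2})\beta_2}-1 \ \le\ k^* \ \le\ \sqrt{(\Delta+\tfrac{r}{2})\beta_2}.
\end{equation*}
In particular $k^*=\Theta(\sqrt{\Delta})\to\infty$. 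The condition $\Delta>\frac{4}{\beta_2}$ is immediate, and since $k^*\to\infty$ while $\sqrt{\frac{r}{2}}\beta_2+1$ is a fixed constant independent of $\Delta$, the condition $k^*>\sqrt{\frac{r}{2}}\beta_2+1$ holds for all large $\Delta$, placing us in sub-case~2.

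The decisive step is the final inequality $k^*<\sqrt{(\Delta-\frac{r}{2})\beta_1}$, and here the strength ordering $\beta_1>\beta_2$ does the work. Using the upper bound $k^*\le\sqrt{(\Delta+\frac{r}{2})\beta_2}$, it suffices to show $\sqrt{(\Delta+\frac{r}{2})\beta_2}<\sqrt{(\Delta-\frac{r}{2})\beta_1}$, which after squaring reduces to the linear condition $\Delta(\beta_1-\beta_2)>\frac{r}{2}(\beta_1+\beta_2)$, i.e. $\Delta>\frac{r(\beta_1+\beta_2)}{2(\beta_1-\beta_2)}$. Since $\beta_1>\beta_2$, the right-hand side is a finite positive constant, so this holds for all sufficiently large $\Delta$. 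Taking $\Delta$ larger than the maximum of the finitely many thresholds produced above makes all three conditions of branch (3)(2)(a) hold simultaneously, and the theorem yields that $\mathbf{t}^*_4$ is the unique SPE.

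I expect the only delicate point to be the bookkeeping around the ceiling: I must keep the inequalities strict and confirm that the $\lceil\cdot\rceil-1$ rounding does not spoil the $\Theta(\sqrt{\Delta})$ estimate for $k^*$. It does not, because rounding perturbs $k^*$ by less than one unit while both sides of every governing inequality separate at rate $\sqrt{\Delta}$; everything else is a routine comparison of one $\sqrt{\Delta}$ quantity against another $\sqrt{\Delta}$ quantity with a strictly larger coefficient, guaranteed by $\beta_1>\beta_2$.
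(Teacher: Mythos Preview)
Your proposal is correct and follows essentially the same route as the paper's proof: both identify that for large $\Delta$ the minimum in the definition of $k^*$ is achieved by the square-root term, so $k^*\sim\sqrt{\beta_2\Delta}$, and then compare this against $\sqrt{(\Delta-\tfrac{r}{2})\beta_1}\sim\sqrt{\beta_1\Delta}$, using $\beta_1>\beta_2$ to conclude that branch~(3)(2)(a) is eventually in force. The only cosmetic difference is that the paper phrases the comparison via limiting ratios $k^*/\sqrt{\Delta}\to\sqrt{\beta_2}$ and $\sqrt{(\Delta-\tfrac{r}{2})\beta_1}/\sqrt{\Delta}\to\sqrt{\beta_1}$, whereas you extract an explicit threshold $\Delta>\frac{r(\beta_1+\beta_2)}{2(\beta_1-\beta_2)}$ and also verify the intermediate condition $k^*>\sqrt{\tfrac{r}{2}}\beta_2+1$ explicitly; the paper leaves that step implicit.
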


\begin{proof}
   By Theorem \ref{thm:SPEdis}, only $\mathbf{t}^*_3$, $\mathbf{t}^*_4$, $\mathbf{t}^*_5$ are the possible SPEs in Case (3) with $k^* = \lceil\min(\sqrt{(E_1-E_2+\frac{r}{2})\beta_2}, \frac{(E_1-E_2)\beta_2}{4}+1) \rceil - 1$. For all large values of $E_1-E_2$, we have $k^* = \lceil\sqrt{(E_1-E_2+\frac{r}{2})\beta_2} \rceil - 1$. Thus $k^* / \sqrt{(E_1-E_2)} \rightarrow \sqrt{\beta_2}$, and $\sqrt{(E_1-E_2 -\frac{r}{2})\beta_1} / \sqrt{(E_1-E_2)} \rightarrow \sqrt{\beta_1}$. Given $\beta_1 > \beta_2$, $k^* < \sqrt{(E_1-E_2 -\frac{r}{2} )\beta_1} $ for all large values of $E_1-E_2$, leading to Case (3) 2(a), where $\mathbf{t}^*_4$ is the unique SPE.
\end{proof}

We include an example below that illustrates all SPEs from Case (3) in Theorem~\ref{thm:SPEdis}, corresponding to different values of $\Delta_E = E_1 - E_2$.

\begin{exmp}\label{eg:CASE3}
Consider $r=2$, $\beta_1 = 4.5$, and $\beta_2 = 4$. The relations among the four functions in Figure~\ref{fig:DWFFGcase3} determine the type of SPE. Each region corresponds to a unique SPE: $\mathbf{t}_3^*$ - $(1,a_1), (a_2, a_3), (a_4, a_5)$; $\mathbf{t}_4^*$ - $(b_1,b_2), (b_3, b_4), (b_5, +\infty)$; $\mathbf{t}_5^*$ - $(a_1,a_2), (a_3, a_4), (a_5, b_1), (b_2, b_3), (b_4, b_5)$. Note when $\Delta_E > a_5$, it becomes Case (3) 2), where either $\mathbf{t}_4^*$ or $\mathbf{t}_5^*$ is an SPE. When $\Delta_E > b_5$, it becomes Case (3) 2) b), where $\mathbf{t}_4^*$ is the unique SPE.
\end{exmp}

\begin{figure}[!ht]
    \centering
    \begin{subfigure}[b]{0.95\linewidth}
        \centering
        \includegraphics[width=\linewidth]{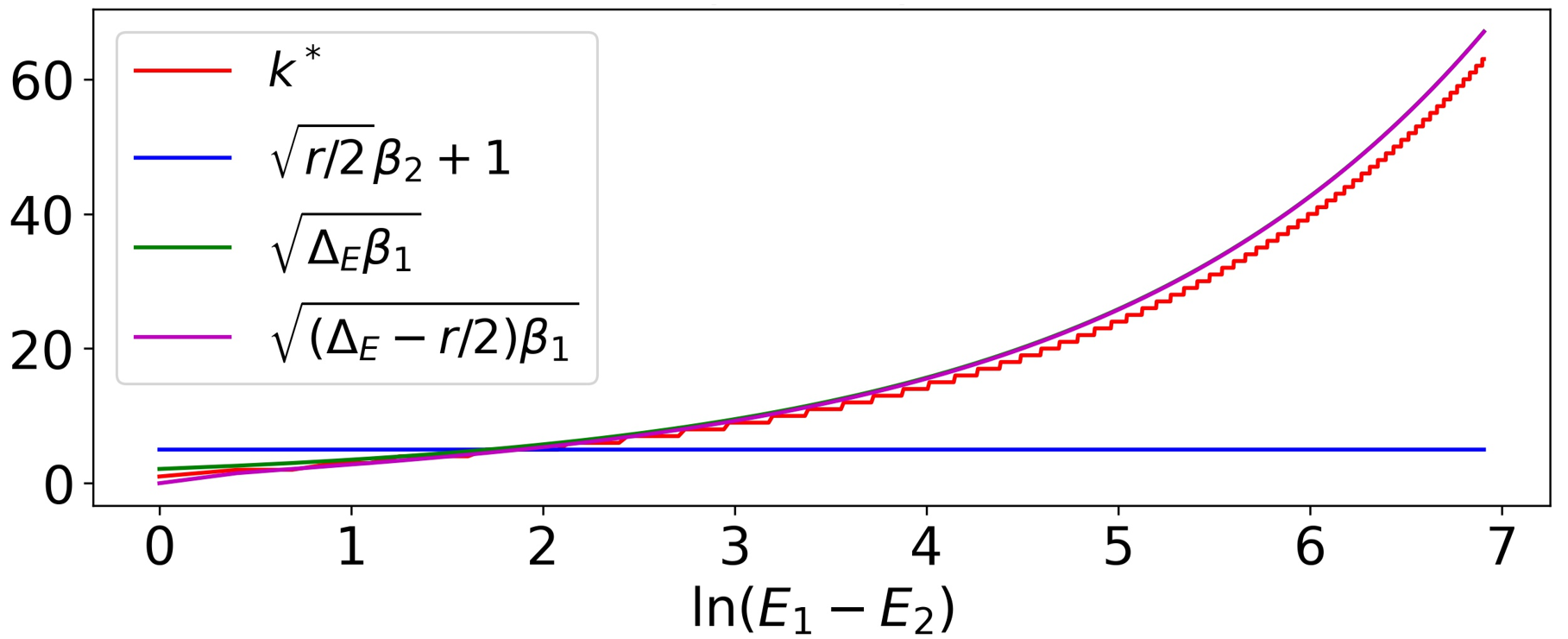}
        \caption{Values of the functions shown in the legend when $\Delta_E \in (1,1000)$.}
        \label{fig:DWFFGcase3-ln}
    \end{subfigure}
    \hfill
    \begin{subfigure}[b]{0.95\linewidth}
        \centering
        \includegraphics[width=\linewidth]{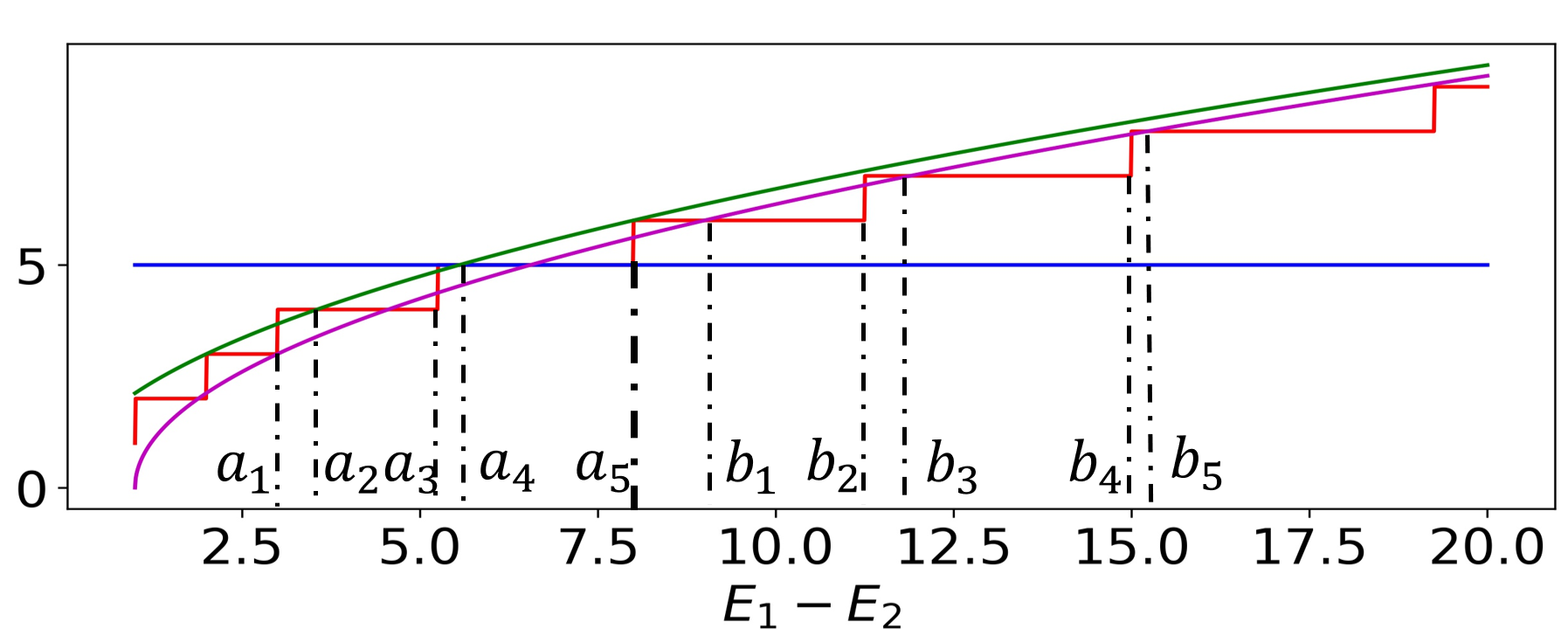}
        \caption{The values of the functions shown in the legend are presented when $\Delta_E$  falls within the range $(1,20)$. It is the beginning region of (a). This depiction serves to elucidate the relationships among these four different functions in detail.}
        \label{fig:DWFFGcase3-detail}
    \end{subfigure}
    \caption{The values of the four functions determine the type of SPEs in Case (3), based on their relations.}
    \label{fig:DWFFGcase3}
\end{figure}

\section{Discussion}\label{sec:discussion}
In this section, we summarize the differences and similarities among the continuous-time game, the discrete-time game, and the two-agent SFG from \cite{wang2023cooperation} in Table~\ref{table:summaryAllgames}.

\begin{table}[h]
\small
\centering
\caption{Comparison of SPEs in the 2-agent  SFG, the continuous-time game, and the discrete-time game}.
\label{table:summaryAllgames}
\renewcommand{\arraystretch}{1.5} 
\begin{tabular}{|c|c|c|c|} 
\hline
{Game} & SFG {\cite{wang2023cooperation}} & {continuous-time} & {discrete-time}\\
\hline
Strategies & continuous & continuous & discrete \\
\hline
Existence & yes &  no & yes\\
 \hline
Uniqueness & yes & no & no\\
 \hline
$\#$ SPE types & 2 & 3 & 5\\
\hline
A strict flock & possible & never & possible \\
\hline
$t_1^* \leq t_2^*$ & yes & yes & no \\
\hline
\end{tabular}
\end{table}

As mentioned earlier, since agents do not need to arrive at the exact same time to benefit from flocking, the competition is more fierce between agents in FF game than in SFG. In the continuous-time setting, the weaker agent can benefit from flocking while getting the best territory by arriving just ahead of the stronger agent. This also explains the lack of a strict flock in any SPE of the continuous-time game. 

In discrete-time setting, it takes the weaker agent additional travel costs to obtain the better territory. Therefore, if the travel cost is higher than the territorial benefit, then the weaker agent will abandon the competition, arriving no earlier than the stronger agent and in a (strict) flock. Thus the discrete action space effectively leads to less competition compared to the continuous case. 

In general, the more relaxed definition of flocking results in a larger and richer set of SPEs. For instance, in the strict flocking case, if the weaker agent decides to arrive later than the stronger agent, its only option is $t_o$. However, in the present game, it has two options: $t_1+1$ (higher travel cost but lower predation risk) and $t_o$ (lower travel cost but higher predation risk). Another example is the emergence of the possibility of the weaker agent arriving earlier than the stronger agent in the discrete-time case, while in the strict flocking case this cannot occur.

The different types of SPEs in all three sequential games are illustrated in Figure~\ref{fig:3FFG}, where $\Delta_E = E_1 -E_2$ denotes the territorial difference. Figure~\ref{fig:3FFG}(a) shows the three types of SPES in the continuous-time game: (1) when $\Delta_E$ is small, the flock with $t_2^* = t_o$ is the SPE; (2) when $\Delta_E$ is larger, there could be two different SPEs (see Section~\ref{sec:2-agent-continuous} for details): the flock where both arrive earlier or no flocking where Agent 2 arrives at $t_o$. Figure~\ref{fig:3FFG}(b) shows the five types of SPEs in the discrete-time game: (1) when $\Delta_E$ is small enough, the strict flock at $t_o$ is the SPE; (2) when $\Delta_E$ is larger, the flock $(t_o-1,t_o)$ is the SPE; (3) when $\Delta_E$ is even higher, there could be three different SPEs (see Section~\ref{sec:discreteWFFG} for details): a flock $(t_o,t_o-1)$ where the weaker agent arrives first, a flock at earlier times, or no flocking. In this region of the figure (i.e. $\Delta_E > \frac{4}{\beta_2}$), agents on the same line are in the same SPE. By Corollary~\ref{coro:deltaE_large}, when $\Delta_E$ is large enough, there exists a unique SPE, where the stronger agent arrives much earlier ahead while the weaker agent arrives at $t_o$.
Figure~\ref{fig:3FFG}(c) shows the two types of SPEs in SFG: (1) when $\Delta_E$ is small enough, the strict flock at $t_o$ is the SPE; (2) when $\Delta_E$ is larger, there is no flock.

\begin{figure}[!ht]
    \centering
    \begin{subfigure}[b]{0.45\linewidth}
        \centering
        \includegraphics[width=\linewidth]{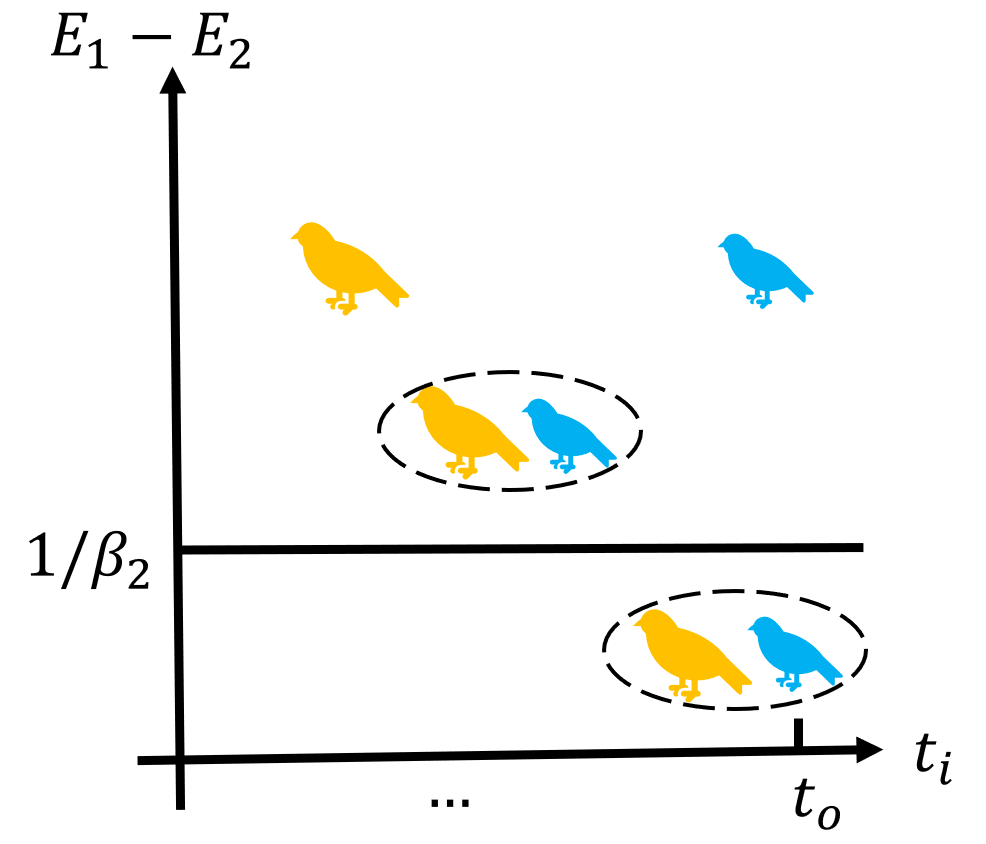}
        \caption{continuous time}
        \label{fig:SPEtypeCT}
    \end{subfigure}
    \hfill
    \begin{subfigure}[b]{0.45\linewidth}
        \centering
        \includegraphics[width=\linewidth]{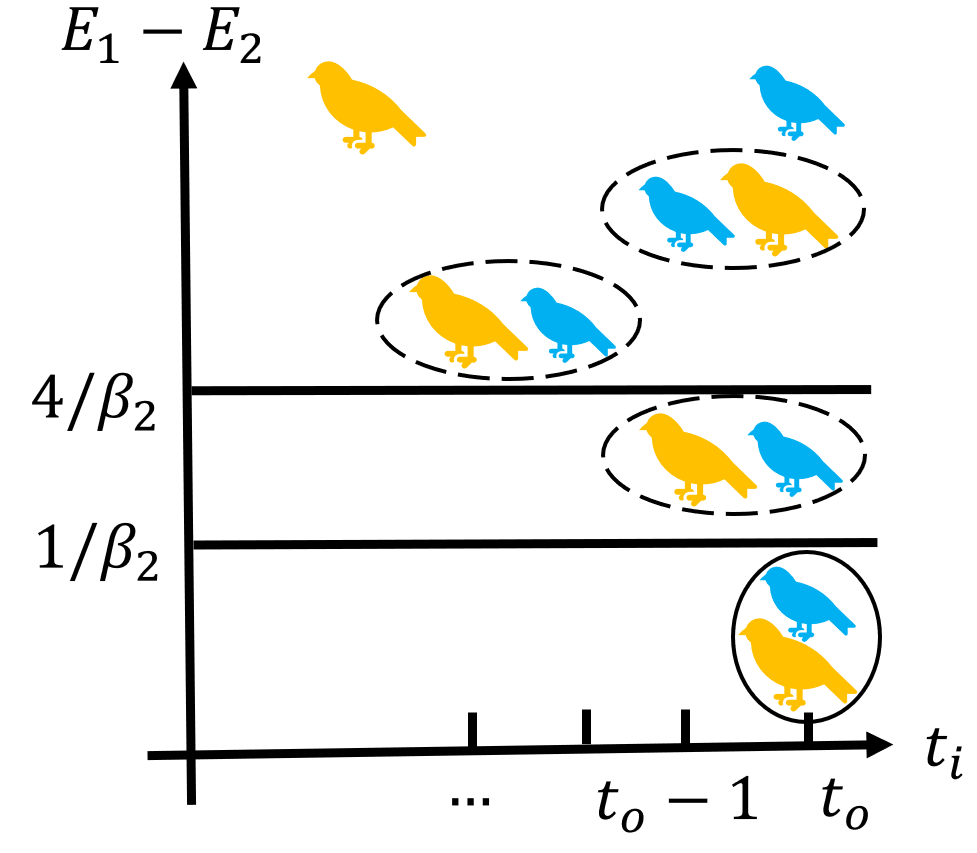}
        \caption{discrete time}
        \label{fig:SPEtypeDT}
    \end{subfigure}
    \hfill
    \begin{subfigure}[b]{0.45\linewidth}
        \centering
        \includegraphics[width=\linewidth]{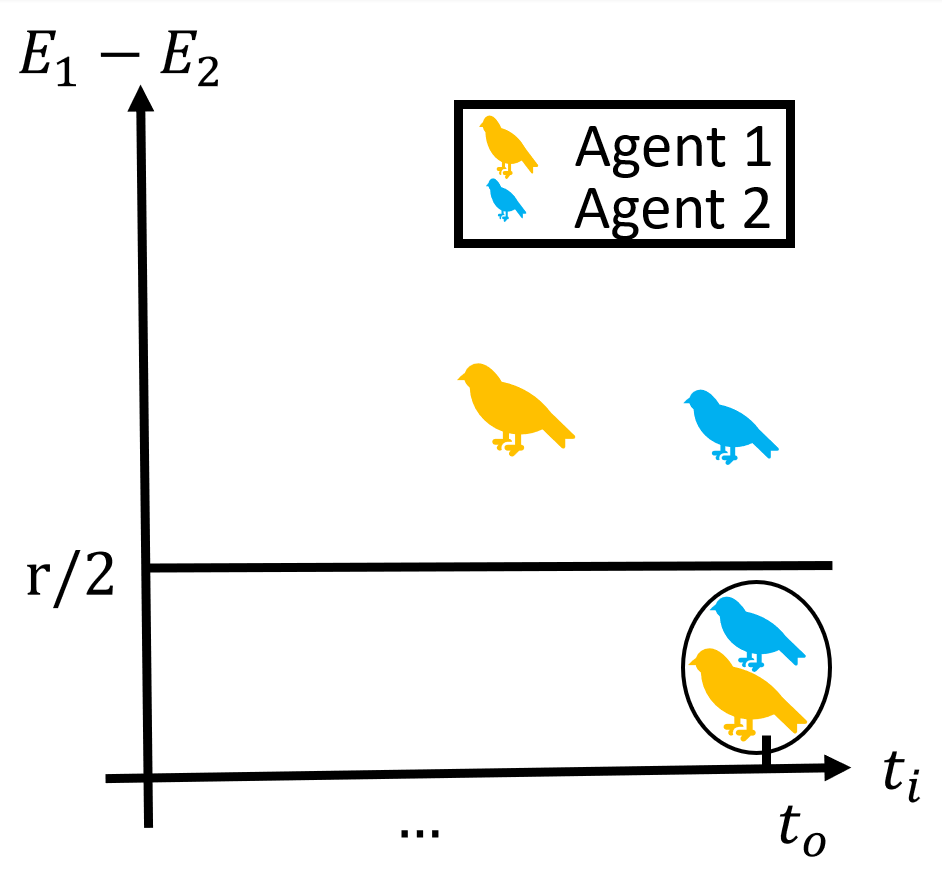}
        \caption{SFG}
        \label{fig:SPEtypeSF}
    \end{subfigure}
    \caption{A simplified illustration of different types of SPEs in continuous-time, discrete-time flock formation game, and SFG. The x-axis is the arrival times of the agents and the y-axis is the difference between the two territories. Depending on the territory differences, there are different cases or games. A solid circle shows a strict flock, while a dashed circle shows a flock but not a strict one.}
    \label{fig:3FFG}
\end{figure}

\section{Conclusion}\label{sec:conclusion} 
 We studied a Stackelberg game on how a flock is formed by individuals arriving within a neighborhood in time. We analyzed the properties of SPEs in both continuous-time and discrete-time settings. Compared with an earlier work requiring flocking to be strictly limited to those arriving at exactly the same time, we show there is a richer set of SPEs in the present game with more intense competition. 

While this study has focused on the 2-agent model, some of the properties and results hold in general for the corresponding $n$-agent game. First, no agent arrives later than $t_o$ in an $n$-agent game, regardless of whether the game is in continuous or discrete time. Secondly, there always exists one or more pure strategy SPEs in an $n$-agent discrete-time game, and developing an algorithm to find a pure strategy SPE for the $n$-agent discrete-time game is a direction of future work. 

\section*{ACKNOWLEDGMENT}
This work has been supported by the ARO under contract W911NF1810208.

\nocite{*}
\bibliographystyle{unsrt}
\bibliography{myreference}

\begin{thebibliography}{10}

\bibitem{wang2023cooperation}
Chenlan Wang, Shane~G DuBay, and Mingyan Liu.
\newblock Cooperation and competition: A sequential game model of flocking.
\newblock In {\em 2023 62nd IEEE Conference on Decision and Control (CDC)}, pages 2214--2219. IEEE, 2023.

\bibitem{pitcher1982fish}
TJ~Pitcher, AE~Magurran, and IJ~Winfield.
\newblock Fish in larger shoals find food faster.
\newblock {\em Behavioral Ecology and Sociobiology}, 10(2):149--151, 1982.

\bibitem{sandler1980clubs}
Todd Sandler and John~T. Tschirhart.
\newblock The economic theory of clubs: An evaluative survey.
\newblock {\em Journal of Economic Literature}, 18(4):1481--1521, 1980.

\bibitem{beauchamp2021flocking}
Guy Beauchamp.
\newblock Flocking in birds increases annual adult survival in a global analysis.
\newblock {\em Oecologia}, 197(2):387--394, 2021.

\bibitem{beauchamp2011long}
Guy Beauchamp.
\newblock Long-distance migrating species of birds travel in larger groups.
\newblock {\em Biology Letters}, 7(5):692--694, 2011.

\bibitem{mirzaeinia2019energy}
Amir Mirzaeinia and Mostafa Hassanalian.
\newblock Energy conservation of v-shaped flocking canada geese through leader and tail switching.
\newblock In {\em AIAA Propulsion and Energy 2019 Forum}, page 4152, 2019.

\bibitem{janiszewski2014timing}
Tomasz Janiszewski, Piotr Minias, and Zbigniew Wojciechowski.
\newblock Timing of arrival at breeding grounds determines spatial patterns of productivity within the population of white stork (ciconia ciconia).
\newblock {\em Population Ecology}, 56(1):217--225, 2014.

\bibitem{couzin2006behavioral}
Iain~D Couzin.
\newblock Behavioral ecology: social organization in fission--fusion societies.
\newblock {\em Current Biology}, 16(5):R169--R171, 2006.

\bibitem{goette2006impact}
Lorenz Goette, David Huffman, and Stephan Meier.
\newblock The impact of group membership on cooperation and norm enforcement: Evidence using random assignment to real social groups.
\newblock {\em American Economic Review}, 96(2):212--216, 2006.

\bibitem{hollard2000existence}
Guillaume Hollard.
\newblock On the existence of a pure strategy nash equilibrium in group formation games.
\newblock {\em Economics Letters}, 66(3):283--287, 2000.

\bibitem{milchtaich2002stability}
Igal Milchtaich and Eyal Winter.
\newblock Stability and segregation in group formation.
\newblock {\em Games and Economic Behavior}, 38(2):318--346, 2002.

\bibitem{banerjee2001core}
Suryapratim Banerjee, Hideo Konishi, and Tayfun S{\"o}nmez.
\newblock Core in a simple coalition formation game.
\newblock {\em Social Choice and Welfare}, 18(1):135--153, 2001.

\bibitem{bogomolnaia2002stability}
Anna Bogomolnaia and Matthew~O. Jackson.
\newblock The stability of hedonic coalition structures.
\newblock {\em Games and Economic Behavior}, 38(2):201--230, 2002.

\bibitem{slikker2001coalition}
Marco Slikker.
\newblock Coalition formation and potential games.
\newblock {\em Games and Economic Behavior}, 37(2):436--448, 2001.

\bibitem{wang2024structural}
Chenlan Wang, Mehrdad Moharrami, Kun Jin, David Kempe, P~Jeffrey Brantingham, and Mingyan Liu.
\newblock Structural stability of a family of spatial group formation games.
\newblock {\em IEEE Transactions on Network Science and Engineering}, 2024.

\bibitem{kokko1999competition}
Hanna Kokko.
\newblock Competition for early arrival in migratory birds.
\newblock {\em Journal of Animal Ecology}, 68(5):940--950, 1999.

\bibitem{van2013community}
Yves van Gennip, Blake Hunter, Raymond Ahn, Peter Elliott, Kyle Luh, Megan Halvorson, Shannon Reid, Matthew Valasik, James Wo, George~E Tita, et~al.
\newblock Community detection using spectral clustering on sparse geosocial data.
\newblock {\em SIAM Journal on Applied Mathematics}, 73(1):67--83, 2013.

\bibitem{wittemyer2005socioecology}
George Wittemyer, Iain Douglas-Hamilton, and Wayne~Marcus Getz.
\newblock The socioecology of elephants: analysis of the processes creating multitiered social structures.
\newblock {\em Animal behaviour}, 69(6):1357--1371, 2005.

\bibitem{collins2017agent}
Andrew~J. Collins, Caitlin~V.M. Cornelius, and John~A. Sokolowski.
\newblock Agent-based model of criminal gang formation.
\newblock In {\em Proceedings of the Agent-Directed Simulation Symposium}, pages 1--10, 2017.

\bibitem{collins2018strategic}
Andrew~J. Collins and Erika Frydenlund.
\newblock Strategic group formation in agent-based simulation.
\newblock {\em Simulation}, 94(3):179--193, 2018.

\bibitem{lindstrom1989finch}
{\AA}ke Lindstr{\"o}m.
\newblock Finch flock size and risk of hawk predation at a migratory stopover site.
\newblock {\em The Auk}, 106(2):225--232, 1989.

\end{thebibliography}

\appendix
The following is the full proofs for Theorem~\ref{thm:SPEcon} and Theorem~\ref{thm:SPEdis} respectively.

\begin{proof} [Proof of Theorem~\ref{thm:SPEcon}]
Throughout the proof, we will use $t_{i-}$ to denote the arrival time $t_i-\epsilon$ for small enough $\epsilon>0$. \\
{\bf Case (1): $E_1 - E_2 \leq \frac{1}{\beta_2}$}\\
By Theorem 1, Agent 2 will never strictly cooperate with Agent 1. Agent 2 will cease competing once its arrival time reaches the tipping point, where it becomes indifferent between $t_{1-}$ and the time when Agent 2 obtains the worse territory with the lowest traveling cost (which is $t_o$) while still benefiting from the flocking (i.e. $t_o-t_1 \leq 1$):
$u_2(t_1,t_{1-}) = u_2(t_1, t_o)$ where $u_2(t_1,t_{1-}) = E_1 - \frac{(t_{1-} - t_o)^2}{\beta_2} - \frac{r}{2}$ and $u_2(t_1, t_o) = E_2 - \frac{r}{2}$. Thus, the tipping point is $t_1 = t_o - \sqrt{(E_1-E_2)\beta_2} = t_{1,1}^*$. Given $E_1 - E_2 \leq \frac{1}{\beta_2}$, $t_o-t_{1,1}^* = \sqrt{(E_1-E_2)\beta_2} \leq 1$, indicating that Agent 2 is in the flocking window within one unit after Agent 1's arrival at the tipping point.

If Agent 1 arrives earlier than the tipping point, its utility decreases since the traveling cost increases and the predation risk stays the same or increases if it arrives more than 1 unit earlier from Agent 2. 

If Agent 1 ends up with the worse territory, then the best utility it can obtain is at $t_o$ and receive the benefit from weakly flocking: $u_1(t_o, t_2) = E_2 - \frac{r}{2}$ where $t_2<t_o$ and $t_o-t_2 \leq 1$. If Agent 1 arrives at the tipping point, its utility is $u_1( t_{1,1}^*, t_o) = E_1 - \frac{(E_1-E_2)\beta_2}{\beta_1} - \frac{r}{2}$. Since $\beta_1>\beta_2$, $E_1 - \frac{(E_1-E_2)\beta_2}{\beta_1} > E_2$ and thus $u_1(t_{1,1}^*, t_o) > u_1(t_o, t_2)$. Hence, Agent 1's utility decreases if it ends up with the worse territory.

Therefore, Agent 1's best response is $t_{1,1}^*$ while Agent 2's best response after observing Agent 1's action is $t_{2,1}^* = t_o$.

{\bf Case (2): $E_1 - E_2 > \frac{1}{\beta_2}$}
\begin{enumerate}
    \item If $(E_1-E_2)\beta_2 < \sqrt{2r\beta_2} +1$\\
    Given the larger territory difference in this scenario, leading to increased competition, Agent 2 will withdraw from the competition upon reaching the tipping point. At this point, it becomes indifferent between $t_{1-}$ and the time when Agent 2 secures the inferior territory, while still benefiting from flocking, provided that the travel cost remains affordable compared to the benefits of flocking (i.e. $t_2-t_1 = 1$): $u_2(t_1,t_{1-}) = u_2(t_1, t_1+1)$ where $u_2(t_1,t_{1-}) = E_1 - \frac{(t_{1-} - t_o)^2}{\beta_2} - \frac{r}{2}$ and $u_2(t_1, t_1+1) = E_2 - \frac{(t_{1}+1 - t_o)^2}{\beta_2} -\frac{r}{2}$. Thus, the tipping point is $t_1 = t_o - \frac{(E_1-E_2)\beta_2+1}{2} = t_{1,3}^*$. Notice that $t_{1,3}^* + 1 < t_0$, and that $u_2(t_{1,3}^*,t_{2,3}^*) > u_2(t_{1,3}^*,t_0)$. 
    
    If Agent 1 arrives earlier than the tipping point, its utility decreases since the traveling cost increases, and the predation risk increases as it arrives more than 1 unit earlier than Agent 2. 
    
    If Agent 1 ends up with the worst territory, then the best utility it can obtain is at $t_o$. Since $t_2^*(t_o) = t_{o-}$, Agent 1 receives the benefit from weakly flocking: $u_1(t_o, t_{o-}) = E_2 - \frac{r}{2}$. If Agent 1 arrives at the tipping point, its utility is $u_1(t^*_{1,3}, t^*_{2,3}) = E_1 - \frac{((E_1-E_2)\beta_2+1)^2}{4\beta_1} - \frac{r}{2}$. Depending on whether $u_1(t^*_{1,3}, t^*_{2,3}) \geq u_1(t_o, t_{o-})$ or not, we have the following subcases:
    \begin{enumerate}
        \item If $4(E_1-E_2)\beta_1 \geq ((E_1-E_2)\beta_2+1)^2$, then $u_1(t^*_{1,3}, t^*_{2,3}) \geq u_1(t_o, t_{o-})$, and the unique SPE is $\mathbf{t}^*_3$. 
        \item If $4(E_1-E_2)\beta_1 < ((E_1-E_2)\beta_2+1)^2$, then $u_1(t^*_{1,3}, t^*_{2,3}) < u_1(t_o, t_{o-})$. Therefore, Agent 1 reaches its best utility if it arrives at $t_o$. However, Agent 2 can always increase its utility by arriving slightly sooner than Agent 1, and thus there is no SPE.
    \end{enumerate}
    \item If $(E_1-E_2)\beta_2 > \sqrt{2r\beta_2} +1$
    Given the larger territory difference in this scenario, resulting in increased competition, Agent 2 will cease competition upon reaching the tipping point. At this juncture, it becomes indifferent between $t_{1-}$ and the time when Agent 2 secures the inferior territory with the lowest travel cost (i.e. $t_o$). However, Agent 2 no longer benefits from flocking since the traveling cost to maintain pace within the flocking time frame becomes prohibitively expensive
    (i.e. $t_o-t_1> 1$): $u_2(t_1,t_{1-}) = u_2(t_1, t_o)$ where $u_2(t_1,t_{1-}) = E_1 - \frac{(t_{1-} - t_o)^2}{\beta_2} - \frac{r}{2}$ and $u_2(t_1, t_o) = E_2 - r$. Thus, the tipping point is $t_1 = t_o - \sqrt{\beta_2(E_1 -E_2+\frac{r}{2})} = t^*_{1,2}$. Notice that $t_{1,2}^*  + 1 < t_0$, and that $u_2(t_{1,2}^*,t_{2,2}^*) > u_2(t_{1,2}^*,t_{1,2}^*-1)$. 
    
    If Agent 1 arrives earlier than the tipping point, its utility decreases since the traveling cost increases while the predation risk stays the same.
    
    If Agent 1 ends up with the worst territory, then the best utility it can obtain is at $t_o$. Since $t_2^*(t_o) = t_{o-}$, Agent 1 receives the benefit from weakly flocking: $u_1(t_o, t_{o-}) = E_2 - \frac{r}{2}$. If Agent 1 arrives at the tipping point, its utility is $u_1(t_{1,2}^*,t_{2,2}^*)) = E_1 - \frac{(E_1-E_2+\frac{r}{2})\beta_2}{\beta_1} - r$. Similar to previous case, depending on whether $u_1(t^*_{1,2}, t^*_{2,2}) \geq u_1(t_o, t_{o-})$ or not, we have the following subcases:
    \begin{enumerate}
        \item If $(E_1-E_2-\frac{r}{2})\beta_1 \geq (E_1-E_2+\frac{r}{2})\beta_2$, then $u_1(t^*_{1,2}, t^*_{2,2}) \geq u_1(t_o, t_{o-})$, and the unique SPE is $\mathbf{t}^*_2$. 
        \item If $(E_1-E_2-\frac{r}{2})\beta_1 < (E_1-E_2+\frac{r}{2})\beta_2$, then $u_1(t^*_{1,2}, t^*_{2,2}) < u_1(t_o, t_{o-})$, and there is no SPE. 
    \end{enumerate}
    \item If $(E_1-E_2)\beta_2 = \sqrt{2r\beta_2} +1$
    Given the larger territory difference in this scenario, leading to increased competition, Agent 2 will cease competing once its arrival time reaches the tipping point. At this stage, it becomes indifferent between $t_{1-}$ and the time when Agent 2 secures the inferior territory with the lowest travel cost $t_o$. Also, Agent 2 is indifferent between $t_{1-}$ and $t_1 + 1$ at which it benefits from flocking. That is to say, $u_2(t_1,t_{1-}) = u_2(t_1, t_o) = u_2(t_1, t_1+1)$ where $u_2(t_1,t_{1-}) = E_1 - \frac{(t_{1-} - t_o)^2}{\beta_2} - \frac{r}{2}$, $u_2(t_1, t_o) = E_2 - r$, and $u_2(t_1, t_1 +1) = E_2 - \frac{(t+1+1-t_o)^2}{\beta_2} - \frac{r}{2}$. Thus, the tipping points of Agent 1 is $t_1 = t_o - \sqrt{\beta_2(E_1 -E_2+\frac{r}{2})} = t^*_{1,2}$ and $t_1 = t_o - \frac{(E_1-E_2)\beta_2+1}{2} =t^*_{1,3}$.
    
    The rest of the reasoning is similar to Cases (2) $2)$ and Cases (2) $1)$. Therefore, the SPEs include the ones from Cases (2) $2)$ and Cases (2) $1)$.
\end{enumerate}
\end{proof}

\begin{proof}[Proof of Theorem~\ref{thm:SPEdis}]
    {\bf Case (1): $E_1 - E_2 \leq \frac{1}{\beta_2}$}\\
    Since $E_1 - E_2 \leq \frac{1}{\beta_2}$, then $u_2 (t_1, t_1) = E_2 - \frac{k_1^2}{\beta_2} - \frac{r}{2} \geq u_2(t_1, t_1 - 1) = E_1 - \frac{(k_1+1)^2}{\beta_2} - \frac{r}{2}$ where $t_1 = t_o - k_1$ and $k_1$ is a positive integer. Therefore, the weaker agent strictly prefers to form a strict flock with the stronger agent rather than competing for better territory in a flock. Anticipating $t_2^*(t_1) \geq t_1$, the stronger agent's best response is then $t_1^* = t_o$ as it reaches its global minimum with the lowest traveling cost, the lowest predation cost, and the best territory. Thus, $\mathbf{t}^*_1$ is the unique SPE.

{\bf Case (2): $E_1-E_2 \in (\frac{1}{\beta_2}, \frac{4}{\beta_2}]$}\\
    Since $E_1-E_2 > \frac{1}{\beta_2}$, then $u_2(t_o, t_o) < u_2(t_o, t_o-1)$. Thus, $(t_o, t_o)$ is no longer an SPE. That is to say, Agent 2 wants to compete with Agent 1 for better territory. Therefore, Agent 1 has to arrive at or earlier than $t_o - 1$ otherwise it will end up with a worse territory and a lower utility ($u_1(t_o, t_o) < u_1(t_o-1, t_o-1)$). If $t_1 = t_o -1$, then $u_2(t_1, t_o) = E_2 - \frac{r}{2} \leq u_2(t_1, t_1 - 1) = E_1 - \frac{4}{\beta_2}  - \frac{r}{2}$ and thus Agent 2 is indifferent between arriving one unit time earlier than arriving later in the flock or strictly prefers the latter (i.e. $t_2^* (t_1 = t_o-1) = t_o$).  Therefore, Agent 1's best response, anticipating Agent 2's decision, is $t_1^* = t_o -1$. As a result,  $\mathbf{t}^*_2$ is the unique SPE.

{\bf Case (3): $E_1-E_2>\frac{4}{\beta_2}$}\\ 
Since $E_1-E_2 > \frac{4}{\beta_2}$, it follows that $u_2(t_o-1, t_o) < u_2(t_o-1, t_o-2)$. Consequently, $(t_o-1, t_o)$ is no longer an SPE. This implies that Agent 2 aims to compete with Agent 1 for the better territory even if Agent 1 arrives at $t_o - 1$. Therefore, Agent 1 must arrive at or before $t_o - 2$ (i.e., $t_1 \leq t_o-2$), or else it risks ending up with a worse territory and lower utility. If Agent 1 arrives at $t_1 \leq t_o -2$, Agent 2's optimal response is limited to three choices: $t_1 - 1$, $t_1 +1$, and $t_o$. In deriving the arrival time $t_1^* = t_o - k_1$ for some integer $k_1\geq 2$  under an SPE, we consider
two possibilities: either Agent 1 secures the best territory $E_1$ in the SPE, or it does not. 

If Agent 1 secures the best territory $E_1$ in the SPE, then it means that Agent 1 arrives ahead of Agent 2 or at the same time as Agent 2 (clearly this never happens here since $t_2^*(t_1)\in \{t_1-1, t_1+1, t_o\}$ and $t_1 + 1< t_o$). When Agent 1 arrives at $t_o - k_1$ and obtains the best territory, the best response of Agent 2 is either $t_o - k_1+1$ or $t_o$ (i.e. $t_o - k_1-1$ is no better than at least one of the other two choices) given: 
\begin{equation}
    u_2(t_o - k_1, t_o - k_1-1) \leq u_2(t_o - k_1, t_o - k_1+1)
\end{equation}
or 
\begin{equation}
    u_2(t_o - k_1, t_o - k_1-1) \leq u_2(t_o - k_1, t_o)
\end{equation}
This also implies that Agent 1 can't secure the best territory if it arrives one unit time later at $t_o - k_1 + 1$, which means that Agent 2 is willing to arrive earlier to get the best territory. Therefore, if $t_1 = t_o - k_1 + 1$, then $t_o - k_1$ is the best response for Agent 2, and thus both $t_o - k_1+2$ and $t_o$ are strictly worse than $t_o - k_1$:
\begin{equation}
    u_2(t_o - k_1+1, t_o - k_1) > u_2(t_o - k_1+1, t_o - k_1+2)
\end{equation}
\begin{equation}
    u_2(t_o - k_1+1, t_o - k_1) > u_2(t_o - k_1+1, t_o)
\end{equation}
Therefore, we can find the arrival of Agent 1 $t_o-k_1$ where $k_1 = \lceil\min(\sqrt{(E_1-E_2+\frac{r}{2})\beta_2}, \frac{(E_1-E_2)\beta_2}{4}+1) \rceil - 1$.\\
When Agent 1 arrives at $t_o-k_1$, Agent 2 strictly prefers $t_o-k_1+1$ over $t_o$ if $u_2(t_o-k_1, t_o-k_1+1) > u_2(t_o-k_1, t_o)$ (i.e. $k_1 < \sqrt \frac{r}{2}\beta_2+1$). Therefore, if $k_1 > \sqrt \frac{r}{2}\beta_2+1$, Agent 2's best response is $t_o$; if $k_1 = \sqrt \frac{r}{2}\beta_2+1$, Agent 2's best responses are $t_o$ and $t_o-k_1+1$.

If Agent 1 does not secure the best territory $E_1$ and ends up with the worst territory $E_2$, then its best response is to arrive at $t_o$ with the lowest travel cost. Observing $t_1 = t_o$, Agent 2's best response is $t_o -1$ since $u_2(t_o, t_o-1) > u_2(t_o, t_o)$. Therefore, $u_1(t_o, t_o-1) = E_2 - \frac{r}{2}$. 

Depending on the sets of conditions, we have only three possible SPEs: $\mathbf{t}^*_3$, $\mathbf{t}^*_4$, and $\mathbf{t}^*_5$.
\begin{enumerate}
    \item If $k_1 \leq \sqrt \frac{r}{2}\beta_2+1$, the only possible SPEs are:  $(t_o-k_1, t_o-k_1+1)$ and $(t_o, t_o-1)$. This includes two different cases: if $k_1 < \sqrt \frac{r}{2}\beta_2+1$, then Agent 2 strictly prefers $t_o-k_1+1$ over $t_o$ and thus the only possible SPEs are:  $(t_o-k_1, t_o-k_1+1)$ and $(t_o, t_o-1)$; If $k_1 = \sqrt \frac{r}{2}\beta_2+1$, then Agent 2 is indifferent between $t_o-k_1+1$ and $t_o$ and thus all three are possible SPEs: $(t_o-k_1, t_o-k_1+1)$, $(t_o-k_1, t_o)$ and $(t_o, t_o-1)$. Since $u_1(t_o-k_1, t_o-k_1+1) > u_1(t_o-k_1, t_o)$, Agent 1 strictly prefers $t_o-k_1$ over $t_o$. Therefore, only $\mathbf{t}^*_3$ and $\mathbf{t}^*_5$ are the possible SPEs. Thus, we can check the two cases together.
    \begin{enumerate}
        \item If $k_1<\sqrt{(E_1 - E_2)\beta_1}$, then $u_1(t_o-k_1, t_o-k_1+1) > u_1(t_o, t_o-1)$ and thus
        $\mathbf{t}^*_3$ is {the unique} SPE.
        \item If $k_1>\sqrt{(E_1 - E_2)\beta_1}$, then $u_1(t_o-k_1, t_o-k_1+1) < u_1(t_o, t_o-1)$ and thus 
        $\mathbf{t}^*_5$ is {the unique} SPE.
        \item If $k_1=\sqrt{(E_1 - E_2)\beta_1}$, then $u_1(t_o-k_1, t_o-k_1+1) = u_1(t_o, t_o-1)$ and thus 
        $\mathbf{t}^*_3$ and $\mathbf{t}^*_5$ are the two SPEs.
    \end{enumerate}
    \item If $k_1 > \sqrt \frac{r}{2}\beta_2+1$, then Agent 2 strictly prefers $t_o$ over $t_o-k_1+1$ and thus the only possible SPEs are:  $\mathbf{t}^*_4$ and $\mathbf{t}^*_5$.
    \begin{enumerate}
        \item If $k_1<\sqrt{(E_1 - E_2-\frac{r}{2})\beta_1}$, then $u_1(t_o-k_1, t_o) > u_1(t_o, t_o-1)$ and thus
        $\mathbf{t}^*_4$ is {the unique} SPE.
        \item If $k_1>\sqrt{(E_1 - E_2-\frac{r}{2})\beta_1}$, then $u_1(t_o-k^*, t_o) < u_1(t_o, t_o-1)$ and thus $\mathbf{t}^*_5$ is {the unique} SPE.
        \item If $k_1=\sqrt{(E_1 - E_2-\frac{r}{2})\beta_1}$, then $u_1(t_o-k_1, t_o) = u_1(t_o, t_o-1)$ and thus
        $\mathbf{t}^*_4$ and $\mathbf{t}^*_5$ are the two SPEs.
    \end{enumerate}
\end{enumerate}
\end{proof}

\end{document}